\DeclareFixedFont{\sfracFont}{U}{euf}{b}{n}{7pt}
\newtheoremstyle{mydefi}
  {15pt}
  {15pt}
  {}
  {}
  {\bfseries}
  {:}
  {.5em}
  {}
\newtheoremstyle{mytheo}
  {15pt}
  {15pt}
  {\slshape}
  {}
  {\bfseries}
  {:}
  {.5em}
  {}
\theoremstyle{mytheo}
\newtheorem{theorem}{Theorem}[section]
\newtheorem{lemma}{Lemma}
\newtheorem{proposition}{Proposition}
\theoremstyle{mydefi}
\newtheorem{definition}{Definition}
\newtheorem{example}{Example}
\newtheorem{remark}{Remark}
\begin{document}
\title{Strong squeezing limit in quantum stochastic models}
\author{Luc Bouten} \date{}
\maketitle

\begin{abstract}
In this paper we study quantum stochastic differential 
equations (QSDEs) \cite{HuP84} that are driven by strongly squeezed 
vacuum noise. We show that for strong squeezing such a QSDE
can be approximated (via a limit in the strong sense) by a QSDE that 
is driven by a single commuting noise process. 
We find that the approximation has an additional Hamiltonian term.   
\end{abstract}

\section{Introduction}\label{sec introduction}

Quantum stochastic differential equations (QSDEs) \cite{HuP84} 
arise via a weak coupling limit from QED \cite{AFLu90, Gou05} and 
are an important tool for modeling the time 
evolution of systems 
that interact with the electromagnetic field in a Markovian approximation
(i.e.\ quantum optics). Many techniques have 
been developed for models that are based on unitaries 
that are given by a QSDE, e.g.\ quantum filtering 
\cite{Bel92b, BHJ07}, adiabatic elimination 
\cite{BoS08, BHS08, GoV07} and input-output
theory \cite{GC84}. 
 
In this paper we will look at QSDEs that are 
driven by squeezed noise \cite{GaZ00, HHKKR}. 
The electromagnetic field acts on a system via 
two field quadrature processes that are both commuting 
processes (i.e.\ given one of these processes: the 
operators at different times of the process commute with 
each other), but the two quadrature processes do not 
commute with each other. In terms of the Hudson-Parthasarathy
theory these quadrature processes are given by linear 
combinations of the annihilator process $A_t$ and creation 
process $A_t^*$, namely $A_t+A_t^*$ and $i(A_t-A_t^*)$. 
With respect to the vacuum state both of these quadrature process 
are Wiener processes (but these Wiener processes do not 
commute with each other). In the squeezed vacuum one of 
the processes has an increased variance and the other 
has a decreased variance. 

In the case of strong squeezing we expect that we can 
neglect the noise with the small variance and can approximate 
the system as if it was driven only by the noise with the 
large variance. In this paper we are going to make this 
idea precise. As we will see in Theorem \ref{thm main result}: 
in case of strong squeezing 
the time evolution can be approximated well by an equation 
that is driven only by one commutative noise process. 
We call such dynamics \emph{essentially commutative} \cite{KuM87}.  
As observed in \cite{GrW02}, in principle it is possible to 
completely undo the decoherence for essentially commutative 
dynamics. This was studied on the level of the 
filter in \cite[Chapter 4]{Bou04b}, 
where a control scheme was introduced that restores 
quantum information (i.e.\ completely freezes the time 
evolution of the filter estimates).

We prove that the difference of the essentially commutative 
approximation and the original system dynamics converges 
strongly to zero. The proof is heavily inspired by the proof 
of the adiabatic theorem in \cite{BoS08} and relies heavily 
on the Trotter-Kato Theorem \cite{Tro58,Kat59}. The essentially 
commutative approximation has an additional Hamiltonian term 
in its dynamics when compared to the original dynamics. 

The remainder of this article is organized as follows: 
Section \ref{sec main result} introduces the system, 
its essentially commutative approximation and states 
the main theorem (Theorem \ref{thm main result}).
In Section \ref{sec examples} we apply the main 
Thm in example systems. We conclude the article 
with Section \ref{sec proof of main result} in which 
we proof the main theorem.

\section{The main result}\label{sec main result}

Throughout this paper $n$ is a positive 
real number, $c = |c|\exp(i\theta)$ is a 
complex number such that $|c| = \sqrt{n(n+1)}$, 
$a$ denotes the real part of $c$,
$\mathcal{H}$ is a separable Hilbert space 
(the \emph{initial space}) and $\mathcal{F}$ 
is the symmetric Fock space over $L^2(\mathbb{R}^+)$.
We denote the vacuum vector in $\mathcal{F}$ by $\Phi$. 
On the Fock space $\mathcal{F}$ we have the 
usual annihilation process $A_t$, creation 
process $A^*_t$ and gauge process $\Lambda_t$ 
in the sense of Hudson and Parthasarathy \cite{HuP84}. 
These noises satisfy the following quantum It\^o 
table \cite{HuP84}:

\begin{center}
{\begin{tabular} {l|lll}
$$ & $dA^*_t$ & $d\Lambda_t$ & $dA_t$ \\
\hline 
$dA^*_t$ & $0$ & $0$ & $0$ \\
$d\Lambda_t$ & $dA^*_t$ & $d\Lambda_t$ & $0$  \\
$dA_t$ & $dt$ & $dA_t$ & $0$ 
\end{tabular} } 
\end{center}

We define the \emph{squeezed noise processes} 
$B_t$ and $B_t^*$ on $\mathcal{F}$ as the 
following linear combinations 
of $A_t$ and $A_t^*$:
\begin{equation}\begin{split}\label{B}
&B_t  :=
\frac{n+c}{\sqrt{2n+1+2a}}A^*_t + \frac{n+1+c}{\sqrt{2n+1+2a}}A_t,\\
&B^*_t := 
\frac{n+\overline{c}}{\sqrt{2n+1+2a}}A_t + 
  \frac{n+1+\overline{c}}{\sqrt{2n+1+2a}}A_t^*.
\end{split}\end{equation}

Note that these noises obey the \emph{squeezed noise}
quantum It\^o table \cite{GaZ00}
\begin{center}
{\begin{tabular} {l|lll}
$$ & $dB_t^*$ & $dB_t$ \\
\hline 
$dB^*_t$ & $\overline{c}dt$ & $ndt$ \\
$dB_t$ & $(n+1)dt$ & $cdt$ 
\end{tabular} }
\end{center}
Note that if $c=a$ is a real number, then 
$B_t + B_t^* = \sqrt{2n + 1 +2a}(A_t+ A_t^*)$ and 
$i(B_t-B_t^*) = i(A_t-A_t^*)/\sqrt{2n +1 +2a}$, i.e.\ 
with respect to the vacuum 
one is a Wiener process with an increased variance whereas 
the other has a decreased variance.

In this paper we study the following 
quantum stochastic differential equation (QSDE)
on $\mathcal{H}\otimes \mathcal{F}$ 
in the sense of Hudson and Parthasarathy \cite{HuP84}:
\begin{equation}\begin{split}
d\tilde{U}_t^{nc} = \Bigg\{&LdB_t^* - L^*dB_t \  + \\
& + \frac{1}{2}\Big(LL\overline{c}- LL^*n -L^*L(n+1) +L^*L^*c\Big)dt 
- iHdt\Bigg\}\tilde{U}_t^{nc}, \ \ \tilde{U}_0^{nc} = I.  
\end{split}\end{equation}
Here $L$ and $H$ are assumed to be 
bounded operators on $\mathcal{H}$ such 
that $S$ is unitary and $H$ is self-adjoint.
We will make the definition a bit more general 
later and then we will drop de tilde in the notation. 
Note that the solution to the above equation 
is unitary \cite{HuP84}.

We now define:
\begin{equation}\begin{split}\label{L and F}
&L_{nc} := \frac{n+1+\overline{c}}{\sqrt{2n + 1 + 2a}}L - \frac{n+c}{\sqrt{2n+1+2a}}L^*,\\
&F_{nc} := \frac{n+\overline{c}}{\sqrt{2n + 1 + 2a}}L - \frac{n+c}{\sqrt{2n+1+2a}}L^*.
\end{split}\end{equation}
Note that $F_{nc}$ is skew-selfadjoint.

Using the definition of $L_{nc}$  
in Eqn (\ref{L and F}) and the definitions 
of $B_t$ and $B_t^*$ in Eqn (\ref{B}) we 
find after some re-arranging: 
\begin{equation}\label{U no Lambda}
d\tilde{U}_t^{nc} = \left\{L_{nc}dA^*_t - L_{nc}^*dA_t -
\frac{1}{2}L_{nc}^*L_{nc}dt
-iH dt\right\}\tilde{U}_t^{nc}, \ \ \ \ \tilde{U}_0^{nc} = I
\end{equation}
Note that we could re-write Eqn \eqref{U no Lambda} in a way 
that makes it clear that the equation is driven by two in themselves 
commuting noise processes $\{i(A_t -A_t^*)\}_{t\ge 0}$ and 
$\{A_t+A_t^*\}_{g\ge0}$. However, 
these two noises do not commute with each other. 
\begin{equation*}
d\tilde{U}_t^{nc} = \left\{\frac{L_{nc} + L_{nc}^*}{2}(dA^*_t - dA_t) + 
\frac{L_{nc}-L_{nc}^*}{2}(dA_t+dA_t^*) -
\frac{1}{2}L_{nc}^*L_{nc}dt
-iH dt\right\}\tilde{U}_t^{nc}.
\end{equation*}
Note that if $n$ becomes very large (strong squeezing), 
then $1$ is negligible with respect to $n$. This is why 
we expect that for large $n$ we can replace the operators 
$L_{nc}$ by $F_{nc}$. This can significantly reduce the 
complexity of the interaction between the system living 
on $\mathcal{H}$ and the field that lives on $\mathcal{F}$. 
If we replace $L_{nc}$ by $F_{nc}$ in 
Eqn (\ref{U no Lambda}), then we see since $F_{nc}^* = -F_{nc}$,  
that the QSDE is now driven by only one classical noise process 
$\{A_t+A_t^*\}_{t\ge0}$. QSDE's that are driven 
by noises that are commutative in themselves and also all 
commute with each other are called 
\emph{essentially commutative} \cite{KuM87}.

We can generalize Eqn (\ref{U no Lambda}) by introducing 
a gauge term in the equation. Often these terms appear 
after an adiabatic elimination procedure \cite{BoS08, BHS08}.
\begin{equation}\label{U}
dU_t^{nc} = \left\{(S-I)d\Lambda_t + L_{nc}dA^*_t - L_{nc}^*SdA_t -
\frac{1}{2}L_{nc}^*L_{nc}dt
-iH dt\right\}U_t^{nc}, \ \ \ \ U_0^{nc} = I.
\end{equation}
Here $S$ is a unitary operator on $\mathcal{H}$.

We now introduce the following QSDE:
\begin{equation}\begin{split}\label{V}
&dV_t^{nc} = \Big\{(S-I)d\Lambda_t + F_{nc}dA^*_t - F_{nc}^*SdA_t -
\frac{1}{2}F_{nc}^*F_{nc}dt -i(H+H_{nc})dt
\Big\}V_t^{nc}, \\ 
&V_0^{nc} = I.
\end{split}\end{equation}
Here the Hamiltonian $H_{nc}$ is given by
\begin{equation*}
H_{nc} = -\frac{i}{2}\Big(\frac{n+\overline{c}}{2n +1 +2a}L^2 
-\frac{n+c}{2n +1 +2a}{L^*}^2 + 
\frac{(\overline{c}-c)}{2n +1 + 2a}L^*L
\Big).
\end{equation*}

We can now state our main result:
\begin{theorem}\label{thm main result}
Let $U_t^{nc}$ be given by Eqn \eqref{U} and $V_t^{nc}$ by 
Eqn \eqref{V}. We then have  
  \begin{equation*}
  \lim_{n\to \infty} \left\|\big(U_t^{nc} -V_t^{nc}\big) \psi\right\| =0, \ \ \ \
  \forall \psi \in \mathcal{H}\otimes \mathcal{F}. 
  \end{equation*}   
\end{theorem}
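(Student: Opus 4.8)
The plan is to reduce the statement to the strong convergence $X_t^{nc} := (U_t^{nc})^* V_t^{nc} \to I$ and to extract the latter directly from the quantum It\^o calculus. Since $U_t^{nc}$ and $V_t^{nc}$ are unitary, we have $U_t^{nc} - V_t^{nc} = U_t^{nc}(I - X_t^{nc})$, so $\|(U_t^{nc} - V_t^{nc})\psi\| = \|(I - X_t^{nc})\psi\|$ for every $\psi \in \HC\otimes\FC$. It therefore suffices to show $X_t^{nc}\psi \to \psi$ as $n\to\infty$.

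First I would compute $dX_t^{nc}$ from the product rule $d(U^*V) = (dU^*)V + U^*(dV) + (dU^*)(dV)$, using the It\^o table and the adjoint equation for $(U_t^{nc})^*$, and then collect the coefficients of $d\Lambda_t$, $dA_t^*$, $dA_t$ and $dt$. Because $S$ is unitary and $F_{nc}^* = -F_{nc}$, I expect the gauge coefficient to cancel identically. Since $L_{nc} - F_{nc} = L/\sqrt{2n+1+2a}$, the creation and annihilation coefficients should collapse to $-S^* L/\sqrt{2n+1+2a}$ and $L^* S/\sqrt{2n+1+2a}$ respectively, hence to be of order $n^{-1/2}$.

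The heart of the argument, and the step I expect to be the main obstacle, is the $dt$ coefficient. Modulo the cancellation of the self-adjoint part $H$, it equals $-\tfrac12 L_{nc}^* L_{nc} - \tfrac12 F_{nc}^* F_{nc} + L_{nc}^* F_{nc} - iH_{nc}$, and the first three terms are \emph{a priori} of order $n$ because $F_{nc} \sim \sqrt{n}$. Writing $L_{nc} = F_{nc} + L/\sqrt{2n+1+2a}$ and using $F_{nc}^* = -F_{nc}$, the purely quadratic $F_{nc}^2$ contributions must cancel exactly, leaving the $O(1)$ cross term $\tfrac{1}{2\sqrt{2n+1+2a}}\big(F_{nc}L + L^* F_{nc}\big)$. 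The decisive algebraic fact to verify is that this cross term equals precisely $iH_{nc}$, so that it is annihilated by the $-iH_{nc}$ furnished by the extra Hamiltonian in Eqn \eqref{V}; what then survives is only $-\tfrac12 (2n+1+2a)^{-1} L^* L$, of order $n^{-1}$. This is a singular-perturbation phenomenon: the diverging coupling $F_{nc}$ times the vanishing discrepancy $L/\sqrt{2n+1+2a}$ produces a finite effective Hamiltonian, which is exactly why $H_{nc}$ must appear and must take its stated form. Getting this algebra right, including the imaginary parts carried by $c$ and $\overline{c}$, is where the care lies.

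It then remains to conclude. After moving the adapted unitary factors $U_s^*$ and $V_s$ to the left of the field differentials (they commute with increments over later times), $X_t^{nc} - I$ is a quantum stochastic integral with no gauge part whose creation, annihilation and time integrands are bounded in operator norm by $\|L\|/\sqrt{2n+1+2a}$, $\|L\|/\sqrt{2n+1+2a}$ and $\tfrac12\|L\|^2/(2n+1+2a)$, uniformly in $s\in[0,t]$. I would then invoke the first fundamental estimate of Hudson and Parthasarathy to bound $\|(X_t^{nc} - I)(u\otimes e(f))\|^2$ by a constant $C(t,f)$ times these squared integrand bounds on the total set of exponential vectors $u\otimes e(f)$, which gives a bound of order $n^{-1}$ and hence strong convergence on that dense set. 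Finally the uniform estimate $\|X_t^{nc} - I\|\le 2$ (both factors being unitary) lets me pass to all of $\HC\otimes\FC$. Alternatively, one can phrase the same convergence through the Trotter--Kato theorem as in \cite{BoS08}, realizing the cocycles as one-parameter semigroups and checking strong convergence of their generators; the decisive input is in either case the cancellation identified above.
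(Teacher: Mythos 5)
Your proposal is correct, and it takes a genuinely different route from the paper. All the cancellations you flag do occur: the gauge coefficient $(S^*-I)+(S-I)+(S^*-I)(S-I)$ vanishes because $S^*S=I$; the creation and annihilation integrands of $d\big((U_t^{nc})^*V_t^{nc}\big)$ collapse to $-S^*L/\sqrt{2n+1+2a}$ and $L^*S/\sqrt{2n+1+2a}$ since $L_{nc}-F_{nc}=L/\sqrt{2n+1+2a}$; and in the time coefficient the quadratic $F_{nc}$ terms drop out while the cross term $\tfrac{1}{2\sqrt{2n+1+2a}}\big(F_{nc}L+L^*F_{nc}\big)$ is indeed exactly $iH_{nc}$, leaving only $-\tfrac12 L^*L/(2n+1+2a)$. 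The Hudson--Parthasarathy fundamental estimates then give $\big\|(X_t^{nc}-I)\,u\otimes\psi(f)\big\|=O(n^{-1/2})$ on exponential vectors, and density together with $\|X_t^{nc}-I\|\le 2$ finishes the argument. The paper proceeds differently: it conjugates by Weyl operators, forms the expectation semigroups $T_t^{(\alpha nc)}(X)=\mbox{id}\otimes\phi\big(V_t^{nc}(\alpha)^*XU_t^{nc}(\alpha)\big)$, computes their generator at $I$ (Proposition \ref{propostion generator}), and invokes the Trotter--Kato theorem (Theorem \ref{thm Trotter-Kato}) before undoing the Weyl conjugation on coherent vectors of step functions. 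The algebraic heart is identical in both arguments: the paper's generator $\mathscr{L}^{(\alpha nc)}(I)=-\tfrac12 L^*L/(2n+1+2a)+(\overline{\alpha}L-\alpha L^*)/\sqrt{2n+1+2a}$ is precisely your surviving time integrand plus your creation/annihilation integrands tested against the coherent amplitude $\alpha$, so the same singular-perturbation cancellation against $iH_{nc}$ is doing the work. What your route buys: it is more elementary and self-contained (no Trotter--Kato, no reduction to step functions, since general $f\in L^2$ is handled directly by the estimates), and it yields an explicit convergence rate $O(n^{-1/2})$ on exponential vectors, which the paper's semigroup argument does not provide. What the paper's route buys: it never needs the quantitative form of the fundamental estimates, it delivers the convergence uniformly on compact time intervals as a by-product of Trotter--Kato, and it follows the template of the adiabatic-elimination literature, which generalizes more readily to settings where the integrands are not uniformly small in norm and only generator convergence on a core is available.
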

Note that $|c| = \sqrt{n(n+1)}$ also goes 
to inifity as $n$ goes to infinity. The phase 
$\theta$ of $c = |c|\exp(i\theta)$ stays constant.

\begin{proof}
See Section \ref{sec proof of main result}.
\end{proof}

\begin{remark}
If one studies the proof of Theorem \ref{thm main result} 
in Section \ref{sec proof of main result}, then one easily 
sees that the Theorem could be stated a little bit more 
general. It is possible to add in extra channels that 
do not scale with $n$, provided that they are are 
present both in Eqn \eqref{U} and Eqn \eqref{V} in 
the same way. It is even possible to have scattering 
$S_{ij}$ between the channel that does scale with n 
and the other channels. We have not stated the Theorem 
in this way, because it is an obvious generalization and 
it would force us to carry a lot of notation around.
\end{remark}

\begin{remark}
Define
\begin{equation}\begin{split}\label{Z}
&Z_t  :=
\frac{n+c}{\sqrt{2n+1+2a}}A^*_t + \frac{n+c}{\sqrt{2n+1+2a}}A_t,\\
&Z^*_t := 
\frac{n+\overline{c}}{\sqrt{2n+1+2a}}A_t + 
  \frac{n+\overline{c}}{\sqrt{2n+1+2a}}A_t^*.
\end{split}\end{equation}
Note that it immediately follows that 
these noises satisfy the following quantum 
It\^o table
\begin{center}
{\begin{tabular} {l|lll}
$$ & $dZ_t^*$ & $dZ_t$ \\
\hline 
$dZ^*_t$ & $\left(\overline{c} - \frac{n+\overline{c}}{2n+1+2a}\right)dt$ & $ndt$ \\
$dZ_t$ & $ndt$ & $\left(c - \frac{n+c}{2n+1+2a}\right)dt$ 
\end{tabular} }
\end{center}
Suppose that $S=I$ in Eqn \eqref{U} and Eqn \eqref{V}. Rewriting
Eqn \eqref{U} in terms of the noises $B_t$ and $B_t^*$ and 
Eqn \eqref{V} in terms of the noises $Z_t$ and $Z_t^*$, we find
\begin{equation*}\begin{split}
dU_t^{nc} = \Bigg\{&LdB_t^* - L^*dB_t + 
\frac{1}{2}\Big(LL\overline{c}- LL^*n -L^*L(n+1)\ + \\ 
&+L^*L^*c\Big)dt 
- iHdt\Bigg\}U_t^{nc}, \ \ \ \ U_0^{nc} = I,\\
dV_t^{nc} = \Bigg\{&LdZ_t^* - L^*dZ_t 
+ \frac{1}{2}\Bigg(LL\left(\overline{c} - \frac{n+\overline{c}}{2n+1+2a}\right) 
- LL^*n -L^*Ln\ + \\
&+ L^*L^*\left(c - \frac{n+c}{2n+1+2a}\right)\Bigg)dt 
- i(H+H_{nc})dt\Bigg\}V_t^{nc}, \ \ \ \ V_0^{nc} = I.
\end{split}\end{equation*}
This provides a second perspective on Thm \ref{thm main result}: 
instead of replacing the coefficients $L_{nc}$ by $F_{nc}$, we 
can equivalently replace the noises $B_t,B_t^*$ by $Z_t, Z_t^*$ 
(where in both cases we also have to add 
the extra Hamiltonian term $H_{nc}$).
Both procedures lead to the same approximation 
for the case of strong squeezing. 
\end{remark}

\section{Examples}\label{sec examples}

\begin{example}
{\bf (Two level atom coupled to strongly squeezed noise)} 
Let $\sigma_+$ and $\sigma_-$ be the usual 
two level raising and lowering operators
\begin{equation*}
\sigma_+ = \begin{pmatrix}0 & 1 \\ 0 & 0\end{pmatrix}, \ \ \ \ 
\sigma_- = \begin{pmatrix}0 & 0 \\ 1 & 0\end{pmatrix}.
\end{equation*}
A two-level atom driven by squeezed light can be 
described by the following QSDE
\begin{equation}
dU_t = \left\{\kappa\sigma_{nc}dA_t^* -\kappa\sigma_{nc}^* dA_t 
-\frac{1}{2}\kappa^2\sigma_{nc}^*\sigma_{nc}dt -iHdt\right\}U_t,\ \ \ \ U_0= I. 
\end{equation}
Here $\kappa^2$ is the decay rate of the two-level atom, 
$H$ is an internal atom Hamiltonian and $\sigma_{nc}$ is 
given by
\begin{equation*}
\sigma_{nc} = \frac{n+1+\overline{c}}{\sqrt{2n + 1 + 2a}}\sigma_- 
- \frac{n+c}{\sqrt{2n+1+2a}}\sigma_+.
\end{equation*}

This system was studied \cite[Chapter 4]{Bou04b} 
in the strong squeezing limit 
at the level of the quantum filter (see \cite{BHJ07} for a review 
of quantum filtering theory). The aim was to control the 
decoherence. It turns out that 
with the control strategy proposed in \cite{Bou04b} 
it is possible to freeze the system dynamics. That is: the 
estimates from the filter have no time evolution any more. 
The reason why the control strategy works, is because the 
system dynamics become essentially 
commutative \cite{KuM87} in the strong squeezing limit.  

Theorem \ref{thm main result} shows that it 
is possible to approximate the system 
already at the level of the 
unitary evolution from which the filter needs to be derived.
It follows from Thm \ref{thm main result} that in 
the case of strong squeezing (large $n$), the system 
can be approximated by the following unitary 
evolution
\begin{equation*}\begin{split}
&V_t^{nc} = \left\{\kappa\gamma_{nc}(dA_t^* + dA_t) 
+\frac{1}{2}\kappa^2\gamma_{nc}^2dt -iHdt -iH_{nc}dt\right\}V_t^{nc}, 
\ \ \ \  V_0=I,\\
&\gamma_{nc} = \frac{n+\overline{c}}{\sqrt{2n + 1 + 2a}}\sigma_- 
- \frac{n+c}{\sqrt{2n+1+2a}}\sigma_+,\\
&H_{nc} = -\frac{i}{2} \frac{(\overline{c}-c)\kappa^2}{2n+1+2a} \sigma_+\sigma_-.
\end{split}\end{equation*} 
This equation is indeed only driven by one commuting 
noise process: $A_t+A_t^*$. 
That is: there is no $i(A_t-A_t^*)$ term driving $V_t^{nc}$: the 
dynamics is essentially commutative.  
\end{example}

\begin{example}
{\bf (A cavity coupled to strongly squeezed noise)}
We consider a cavity coupled to squeezed vacuum noise 
via one of its mirrors. The system lives on the Hilbert 
space $\ell^2(\mathbb{N})\otimes\mathcal{F}$ and is given 
by
\begin{equation*}\begin{split}
&dU_t^{nc} = \left\{\kappa b_{nc}dA_t^* -\kappa b_{nc}^*dA_t 
-\frac{1}{2}\kappa^2b_{nc}^*b_{nc} -i\hbar \omega b^*b\right\}U_t^{nc}, 
\ \ \ \ U_0 = I,\\
&b_{nc} = \frac{n+ 1 + \overline{c}}{\sqrt{2n + 1 + 2a}}b 
- \frac{n+c}{\sqrt{2n+1+2a}}b^*.
\end{split}\end{equation*} 
\end{example}
Here $\kappa^2$ is the decay rate of the cavity, $\omega$ 
is the cavity frequency and $b$ is the standard lowering 
operator and $b^*$ is the standard raising operator for the 
eigen functions of $b^*b$
\begin{equation*}\begin{split}
b\phi_i = \sqrt{i} \phi_{i-1},\ \ \  b^*\phi_i = \sqrt{i+1}\phi_{i+1}, \ \ \ 
b^*b \phi_i = i\phi_i.
\end{split}\end{equation*}
Note that $[b,b^*] = 1$. The operators $b$ and $b^*$ 
are unbounded which means that this example is technically 
out of the scope of Theorem \ref{thm main result}. We fix this by 
simply truncating the operators at a very high level $N$.

We can now apply Theorem \ref{thm main result} and 
find that the time evolution of the cavity and its 
environment in the case of strong squeezing can 
be approximated by
\begin{equation*}\begin{split}
&dV_t^{nc} = \left\{\kappa f_{nc}(dA_t^* + dA_t) 
+\frac{1}{2}\kappa^2f_{nc}^2 -i(\hbar \omega b^*b + H_{nc})\right\}V_t^{nc}, 
\ \ \ \ V_0 = I,\\
&f_{nc} = \frac{n+ \overline{c}}{\sqrt{2n + 1 + 2a}}b 
- \frac{n+c}{\sqrt{2n+1+2a}}b^*, \\
&H_{nc} = -\frac{i\kappa^2}{2}\Big(\frac{n+\overline{c}}{2n +1 +2a}b^2 
-\frac{n+c}{2n +1 +2a}{b^*}^2 + 
\frac{(\overline{c}-c)}{2n +1 + 2a}b^*b
\Big).
\end{split}\end{equation*} 
Notice that the dynamics given by 
$V_t^{nc}$ is again essentially commutative.

\section{Proof of Theorem \ref{thm main result}}\label{sec proof of main result}

Let $\alpha$ be a complex number. We define the Weyl 
operator $W_t(\alpha)$ as the solution to the following
QSDE
\begin{equation*}
dW_t(\alpha) = \left\{\alpha dA_t^* - \overline{\alpha}dA_t 
-\frac{1}{2}|\alpha|^2dt \right\}W_t(\alpha), \ \ \ W_0(\alpha) = I.
\end{equation*}
Now we define 
$U_t^{nc}(\alpha) := U_t^{nc}W_t(\alpha)$ and 
$V_t^{nc}(\alpha) := V_t^{nc}W_t(\alpha)$. It then follows 
from the quantum It\^o rules that
\begin{equation}\begin{split}\label{generator V}
dV_t^{nc}(\alpha)^* = V_t^{nc}(\alpha)^*&
\Big\{(S^*-I)d\Lambda_t +(F_{nc}+\alpha)^*dA_t  - S^*(F_{nc}+{\alpha})dA^*_t \\ 
& -\frac{1}{2}(F_{nc}+\alpha)^*(F_{nc}+\alpha) dt  \\
& - \frac{1}{2}(\overline{\alpha}F_{nc} - \alpha F_{nc}^*)dt+i(H+H_{nc})dt\Big\}, 
\end{split}\end{equation}
\begin{equation}\begin{split}\label{generator U}
dU_t^{nc}(\beta) = \Big\{&(S-I)d\Lambda_t + (L_{nc}+\alpha)dA_t^*  - (L_{nc}+\alpha)^*SdA_t \\ 
& -\frac{1}{2}(L_{nc}+\alpha)^*(L_{nc}+\alpha) dt \ + \\
& + \frac{1}{2}(\overline{\alpha}L_{nc} -\alpha L_{nc}^*)dt-iHdt\Big\}U_t^{nc}(\alpha). 
\end{split}\end{equation}

\begin{definition}\label{semigroups}
We denote by $\mbox{id}: \mathcal{B}(\mathcal{H}) \to \mathcal{B}(\mathcal{H})$  
the identity map $\mbox{id}(X) = X$. We 
write $\phi$ for the state on $\mathcal{B}(\mathcal{H})$ given by taking the 
inner product with the vacuum vector $\Phi$. We let $\mathcal{B}_0$ be the 
Banach subalgebra of $\mathcal{B}(\mathcal{H})$ generated by the 
identity element $I$ in $\mathcal{B}(\mathcal{H})$. We now 
define: 
\begin{equation*}\begin{split}
&T_t^{(\alpha n c)}(X) := \mbox{id}\otimes \phi\Big(
V_t^{nc}(\alpha)^*XU_t^{nc}(\alpha)
\Big), \ \ \mbox{for all} \ \
X \in \mathcal{B}(\mathcal{H}),\ t\ge 0,\\
&T_t(X) := X,\ \ \mbox{for all} \ \ X \in \mathcal{B}_0,\ t\ge 0.
\end{split}\end{equation*} 
\end{definition}

\begin{lemma}
For every $n\ge 0$, the families of bounded linear maps 
$T_t^{(\alpha n c)}$ $t\ge 0$ and $T_t$ $(t\ge0)$ given by 
Definition \ref{semigroups} are norm-continuous one-parameter 
semigroups. 
\end{lemma}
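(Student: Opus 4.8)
The plan is to show that each $T_t^{(\alpha n c)}$ is the exponential of a bounded operator on $\BC(\HC)$, from which both the semigroup law and norm-continuity follow automatically. First I would apply the quantum It\^o rules to the product $V_t^{nc}(\alpha)^* X U_t^{nc}(\alpha)$, using the differentials \eqref{generator V} and \eqref{generator U}. Writing $d(V_t^{nc}(\alpha)^* X U_t^{nc}(\alpha)) = (dV_t^{nc}(\alpha)^*) X U_t^{nc}(\alpha) + V_t^{nc}(\alpha)^* X (dU_t^{nc}(\alpha)) + (dV_t^{nc}(\alpha)^*) X (dU_t^{nc}(\alpha))$ and evaluating the It\^o correction with the table for $dA_t, dA_t^*, d\Lambda_t$, the only product feeding a $dt$-term is $dA_t\,dA_t^* = dt$ (note $X\in\BC(\HC)$ commutes with the field increments), which contributes $(F_{nc}+\alpha)^* X (L_{nc}+\alpha)\,dt$; the remaining cross products produce only $d\Lambda_t$, $dA_t$, $dA_t^*$ terms. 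Collecting this with the $dt$-parts of \eqref{generator V} and \eqref{generator U} yields an integrand of the form $V_s^{nc}(\alpha)^*\,\LC^{(\alpha n c)}(X)\,U_s^{nc}(\alpha)$, where
\[
\LC^{(\alpha n c)}(X) = G_V X + X G_U + (F_{nc}+\alpha)^* X (L_{nc}+\alpha),
\]
with $G_V, G_U\in\BC(\HC)$ the bounded coefficients read off from the $dt$-parts of \eqref{generator V} and \eqref{generator U}. Since $F_{nc}, L_{nc}, H, H_{nc}, S$ and $\alpha$ are all bounded, $\LC^{(\alpha n c)} : \BC(\HC)\to\BC(\HC)$ is a bounded linear map.

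Next I would integrate from $0$ to $t$ and apply $\mbox{id}\otimes\phi$. The key point is that the vacuum expectation of each quantum stochastic integral against $dA_s^*$, $dA_s$ and $d\Lambda_s$ vanishes, since $A_s$ and $\Lambda_s$ annihilate $\Phi$ and the vacuum pairing kills the $dA_s^*$ integrals; all the integrands here are adapted and bounded, so $\mbox{id}\otimes\phi$ may be carried inside the time integral by a Fubini/dominated-convergence argument. Only the $dt$-term survives, and because $\LC^{(\alpha n c)}(X)\in\BC(\HC)$ I recognise $\mbox{id}\otimes\phi\big(V_s^{nc}(\alpha)^*\LC^{(\alpha n c)}(X)U_s^{nc}(\alpha)\big) = T_s^{(\alpha n c)}\big(\LC^{(\alpha n c)}(X)\big)$ directly from Definition \ref{semigroups}. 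This produces the closed integral equation
\[
T_t^{(\alpha n c)}(X) = X + \int_0^t T_s^{(\alpha n c)}\big(\LC^{(\alpha n c)}(X)\big)\, ds, \qquad X\in\BC(\HC).
\]

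Finally, because $\LC^{(\alpha n c)}$ is bounded, this integral equation has the unique solution $T_t^{(\alpha n c)} = \exp\big(t\,\LC^{(\alpha n c)}\big)$, obtained by Picard iteration and seen to be unique by a Gr\"onwall argument applied to the difference of two candidate solutions. An operator exponential of a bounded generator is automatically a norm-continuous one-parameter semigroup, so $T_{s+t}^{(\alpha n c)} = T_s^{(\alpha n c)} T_t^{(\alpha n c)}$, $T_0^{(\alpha n c)} = \mbox{id}$, and $t\mapsto T_t^{(\alpha n c)}$ is continuous in the norm of bounded maps on $\BC(\HC)$; the contraction bound $\|T_t^{(\alpha n c)}(X)\|\le\|X\|$, coming from unitarity of $U_t^{nc}(\alpha)$ and $V_t^{nc}(\alpha)$, confirms boundedness throughout. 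The family $T_t$ on $\BC_0$ is the identity map for every $t$ and is therefore trivially the norm-continuous semigroup generated by $0$.

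The only genuine care needed is the bookkeeping of the It\^o correction term and the justification that $\mbox{id}\otimes\phi$ annihilates the martingale (non-$dt$) integrals while commuting with the time integral; I expect this vacuum-expectation step to be the main, though essentially routine, obstacle, after which boundedness of the generator makes the semigroup and norm-continuity claims immediate.
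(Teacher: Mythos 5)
Your proof is correct, but it takes a genuinely different route from the paper's. The paper argues structurally: the semigroup law is deduced from the cocycle property (with respect to the time shift) of $U_t^{nc}(\alpha)$ and $V_t^{nc}(\alpha)$ together with the factorization of the vacuum over the tensor-product structure of Fock space; contractivity comes from unitarity and the contractivity of $\mbox{id}\otimes\phi$, just as in your last paragraph; and norm-continuity is then simply asserted to follow because bounded QSDE coefficients give a bounded generator. You bypass the cocycle property altogether: your It\^o bookkeeping is right (the only $dt$-term in the It\^o correction is $(F_{nc}+\alpha)^*X(L_{nc}+\alpha)$, since the remaining cross products feed only $d\Lambda_t$, $dA_t$, $dA_t^*$, which die under the vacuum expectation), and the resulting integral equation plus Picard iteration/Gr\"onwall identifies $T_t^{(\alpha nc)}=\exp\big(t\mathscr{L}^{(\alpha nc)}\big)$, from which the semigroup law and norm-continuity follow at once. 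What your route buys: it is self-contained (no appeal to shift covariance of the cocycles), it substantiates the step the paper leaves implicit (that the generator is in fact bounded), and it delivers the generator formula for arbitrary $X$, of which the paper's Proposition \ref{propostion generator} is precisely the evaluation at $X=I$, namely $G_V+G_U+(F_{nc}+\alpha)^*(L_{nc}+\alpha)$. What the paper's route buys: brevity, and a soft argument for the semigroup property that requires no computation and would survive in settings where an explicit generator is awkward to write down. The two technical points your argument rests on --- vanishing of vacuum expectations of quantum stochastic integrals of adapted bounded integrands, and measurability plus local boundedness of $s\mapsto\|T_s^{(\alpha nc)}\|$ for the Gr\"onwall step (supplied by your contraction bound) --- are both standard, so there is no gap.
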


\begin{proof}
The semigroup property of $T_t^{(\alpha n c)}$ follows immediately 
from the cocycle property (wrt the shift) of $V_t^{nc}(\alpha)$
and $U_t^{nc}(\alpha)$. Since the conditional 
expectation $\mbox{id}\otimes\phi$ is norm-contractive and 
$V^{nc}_t(\alpha)$ and $U^{nc}_t(\alpha)$ are unitary, we have
\begin{equation*}
\left\| T^{(\alpha n c)}_t(X)\right\| \le 
\left\|{V^{nc}_t(\alpha)}^* X U^{nc}_t(\alpha)\right\| \le
\left\|{V^{nc}_t(\alpha)}^*\right\| \left\| X\right\| 
\left\| U^{nc}_t(\alpha)\right\| \le \| X\|,
\end{equation*} 
i.e.\ $T_t^{(\alpha nc)}$ is norm-contractive. Note that 
due to the boundedness of all coefficients in the QSDEs 
for $V^{nc}_t(\alpha)$ and $U_t^{nc}(\alpha)$ 
(Eqns \eqref{generator V} and \eqref{generator U}), it immediately follows that 
the generator of $T_t^{(\alpha n c)}$ is bounded. This means 
that $T_t^{(\alpha n c)}$ is norm-continuous. Note that 
the statements about $T_t$ are trivially true.
\end{proof}

\begin{proposition}\label{propostion generator}
The generator $\mathscr{L}^{(\alpha n c)}$ of the semigroup
$T_t^{(\alpha nc)} = \exp(t\mathscr{L}^{(\alpha n c)})$ $(t\ge 0)$
evaluated at the identity element $I$ of $\mathcal{B}(\mathcal{H})$
is given by
\begin{equation}\begin{split}\label{generator}
\mathscr{L}^{(\alpha n c)}(I) &= -\frac{1}{2}(L_{nc}-F_{nc})^*(L_{nc}-F_{nc})
+ \overline{\alpha}(L_{nc}-F_{nc}) -\alpha (L_{nc}-F_{nc})^*\ = \\
&= -\frac{1}{2}\frac{L^*L}{2n+1+2a} + \frac{\overline{\alpha}L -\alpha L^*}{\sqrt{2n+1+2a}}.
\end{split}\end{equation}
\end{proposition}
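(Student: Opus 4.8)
The plan is to obtain $\mathscr{L}^{(\alpha n c)}(I)$ as the time derivative at $t=0$ of $T_t^{(\alpha n c)}(I) = \mathrm{id}\otimes\phi\big(V_t^{nc}(\alpha)^* U_t^{nc}(\alpha)\big)$; this is legitimate because the previous Lemma guarantees that $T_t^{(\alpha n c)}$ is norm-continuous, so its generator is the honest derivative of $t\mapsto T_t^{(\alpha n c)}$ at the origin. First I would differentiate the operator product $V_t^{nc}(\alpha)^* U_t^{nc}(\alpha)$ by the quantum It\^o product rule $d(V^*U) = (dV^*)U + V^*(dU) + (dV^*)(dU)$, inserting the two QSDEs \eqref{generator V} and \eqref{generator U}.

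The key simplification is that $\mathrm{id}\otimes\phi$ kills every adapted stochastic integral against $dA_t$, $dA_t^*$ and $d\Lambda_t$, so after taking the vacuum conditional expectation only the $dt$-coefficient survives, and evaluating its derivative at $t=0$ (where $V_0^{nc}(\alpha)=U_0^{nc}(\alpha)=I$) simply reads off that coefficient at the identity. Thus $\mathscr{L}^{(\alpha n c)}(I)$ is the sum of three pieces: the $dt$-part of the generator in \eqref{generator V}, the $dt$-part of the generator in \eqref{generator U}, and the $dt$-part of the It\^o correction $(dV^*)(dU)$. For the correction I would read the It\^o table: the only product that yields $dt$ is $dA_t\,dA_t^* = dt$, arising from the $dA_t$-coefficient $(F_{nc}+\alpha)^*$ of $dV^*$ and the $dA_t^*$-coefficient $(L_{nc}+\alpha)$ of $dU$, so the correction contributes $(F_{nc}+\alpha)^*(L_{nc}+\alpha)$.

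Next I would collect terms. The four constants of modulus $|\alpha|^2$ cancel, and the linear terms regroup at once into $\overline{\alpha}(L_{nc}-F_{nc}) - \alpha(L_{nc}-F_{nc})^*$, matching the first line of \eqref{generator}. The two Hamiltonians combine to $+iH_{nc}$, and the surviving quadratic part is $-\frac{1}{2}F_{nc}^*F_{nc} - \frac{1}{2}L_{nc}^*L_{nc} + F_{nc}^*L_{nc}$. The target $-\frac{1}{2}(L_{nc}-F_{nc})^*(L_{nc}-F_{nc})$ instead carries the symmetrized cross term $\frac{1}{2}(L_{nc}^*F_{nc}+F_{nc}^*L_{nc})$, so matching the two forces the identity $iH_{nc} = \frac{1}{2}(L_{nc}^*F_{nc}-F_{nc}^*L_{nc})$. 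This is exactly the role played by the extra Hamiltonian: it absorbs the antisymmetric discrepancy between the correction's $F_{nc}^*L_{nc}$ and the symmetrized cross term.

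The main obstacle is verifying this identity. I would substitute the definitions \eqref{L and F} into $L_{nc}^*F_{nc}-F_{nc}^*L_{nc}$, using $F_{nc}^*=-F_{nc}$, and expand in the monomials $L^2$, ${L^*}^2$, $L^*L$, $LL^*$; a short computation shows the $LL^*$ coefficient cancels while the coefficients of $L^2$, ${L^*}^2$ and $L^*L$ come out to $\frac{n+\overline{c}}{2n+1+2a}$, $-\frac{n+c}{2n+1+2a}$, $\frac{\overline{c}-c}{2n+1+2a}$, so that $-\frac{i}{2}(L_{nc}^*F_{nc}-F_{nc}^*L_{nc})$ is precisely $H_{nc}$ as required. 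Finally, the compact second line of \eqref{generator} follows from the elementary cancellation $L_{nc}-F_{nc} = L/\sqrt{2n+1+2a}$, which turns $-\frac{1}{2}(L_{nc}-F_{nc})^*(L_{nc}-F_{nc})$ into $-\frac{1}{2}L^*L/(2n+1+2a)$ and the linear term into $(\overline{\alpha}L-\alpha L^*)/\sqrt{2n+1+2a}$.
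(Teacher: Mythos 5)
Your proof is correct and follows essentially the same route as the paper: apply the quantum It\^o product rule to $V_t^{nc}(\alpha)^*U_t^{nc}(\alpha)$, use that the vacuum conditional expectation kills the stochastic integrals so only the $dt$-coefficient survives, and reduce the claim to the algebraic identity $iH_{nc}=\frac{1}{2}\left(L_{nc}^*F_{nc}-F_{nc}^*L_{nc}\right)$ together with $L_{nc}-F_{nc}=L/\sqrt{2n+1+2a}$. If anything, you are more explicit than the paper, which phrases the last step as completing the square and leaves the verification of the $H_{nc}$ identity (whose monomial coefficients you compute correctly) to the reader; the only blemish is the harmless miscount of the cancelling $|\alpha|^2$ terms (there are three, not four).
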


\begin{proof}
Note that $dT_t^{\alpha nc}(I) = 
\mbox{id}\otimes \phi (d({V_t^{nc}(\alpha)}^*U_t^{nc}(\alpha))) = 
T_t^{(\alpha nc)}(\mathscr{L}^{(\alpha nc)}(I))dt$. 
Using Eqns \eqref{generator V} and \eqref{generator U}, the 
quantum It\^o rule \cite{HuP84} and the fact that vacuum expectations of 
stochastic integrals vanish \cite{HuP84}, we find
\begin{equation*}\begin{split}
\mathscr{L}^{(\alpha n c)}(I) ={ } &-\frac{1}{2}(F_{nc}+\alpha)^*(F_{nc}+\alpha)   
- \frac{1}{2}(\overline{\alpha}F_{nc} - \alpha F_{nc}^*) + i(H+H_{nc}) \\
&-\frac{1}{2}(L_{nc}+\alpha)^*(L_{nc}+\alpha)  
+ \frac{1}{2}(\overline{\alpha}L_{nc} -\alpha L_{nc}^*)-iH\\
&+ (F_{nc} + \alpha)^*(L_{nc} + \alpha).
\end{split}\end{equation*}
We can easily re-write this to obtain
\begin{equation*}\begin{split}
\mathscr{L}^{(\alpha n c)}(I) ={ } &-\frac{1}{2}\Big(
(L_{nc}+\alpha)^*(L_{nc}+\alpha) 
+(F_{nc}+\alpha)^*(F_{nc}+\alpha)
- 2(F_{nc} + \alpha)^*(L_{nc} + \alpha)\Big)  
\\
&  
+ \frac{1}{2}(\overline{\alpha}(L_{nc}-F_{nc}) -\alpha (L_{nc}-F_{nc})^*) 
+ iH_{nc}. 
\end{split}\end{equation*}
Now we complete the squares and obtain
\begin{equation*}\begin{split}
\mathscr{L}^{(\alpha n c)}(I) ={ } &-\frac{1}{2}\Big(
(L_{nc}-F_{nc})^*(L_{nc}-F_{nc}) 
- (F_{nc} + \alpha)^*(L_{nc} + \alpha)
+(L_{nc}+\alpha)^*(F_{nc}+\alpha)\Big)  
\\
&  
+ \frac{1}{2}(\overline{\alpha}(L_{nc}-F_{nc}) -\alpha (L_{nc}-F_{nc})^*) 
+ iH_{nc}. 
\end{split}\end{equation*}
Taking all $\alpha$ and $\overline{\alpha}$ terms 
together, we find
\begin{equation*}\begin{split}
\mathscr{L}^{(\alpha n c)}(I) ={ } &-\frac{1}{2}
(L_{nc}-F_{nc})^*(L_{nc}-F_{nc})
+ \overline{\alpha}(L_{nc}-F_{nc}) -\alpha (L_{nc}-F_{nc})^*  
\\
&  
+ \frac{1}{2}(F_{nc}^*L_{nc}- L_{nc}^*F_{nc}) 
+ iH_{nc}. 
\end{split}\end{equation*}
The proposition now follows from the definition of 
$L_{nc}, F_{nc}$ and $H_{nc}$.
\end{proof}

The proof of our main result (Theorem \ref{thm main result})
relies heavily on the Trotter-Kato theorem \cite{Tro58,Kat59}. 
We have taken the formulation 
of the Trotter-Kato theorem 
from \cite[Thm 3.17, page 80]{Dav80}. 

\begin{theorem}{\bf Trotter-Kato Theorem}\label{thm Trotter-Kato}
Let $\mathcal{B}$ be a Banach space and let 
$\mathcal{B}_0$ be a closed subspace of $\mathcal{B}$. 
For each $n \ge 0$, let $T_t^{(n)}$ be a strongly 
continuous one-parameter contraction semigroup on 
$\mathcal{B}$ with generator 
$\mathscr{L}^{(n)}$. Moreover, let $T_t$ be a 
strongly continuous one-parameter contraction 
semigroup on $\mathcal{B}_0$ with generator $\mathscr{L}$. 
Let $\mathcal{D}$ be a core for $\mathscr{L}$.
The following conditions are equivalent:
  \begin{enumerate}
  \item For all $X \in \mathcal{D}$ there exist 
  $X^{(n)} \in \mbox{Dom}\left(\mathscr{L}^{(n)}\right)$ such 
  that 
    \begin{equation*}
    \lim_{n \to \infty} X^{(n)} = X, \qquad 
    \lim_{n \to \infty} \mathscr{L}^{(n)}\left(X^{(n)}\right) = 
    \mathscr{L}(X).
    \end{equation*}
  \item For all $0 \le s < \infty$ and all $X \in \mathcal{B}_0$ 
    \begin{equation*}
    \lim_{n \to \infty} \left\{\sup_{0 \le t \le s} 
    \left\|T_t^{(n)}(X) - T_t(X)\right\|\right\} = 0.
    \end{equation*}   
  \end{enumerate}
\end{theorem}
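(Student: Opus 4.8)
The plan is to route the equivalence through \emph{strong resolvent convergence}, exploiting that each semigroup is a contraction. Writing $R^{(n)}_\lambda:=(\lambda-\mathscr{L}^{(n)})^{-1}$ and $R_\lambda:=(\lambda-\mathscr{L})^{-1}$, the Hille--Yosida theorem gives the uniform bounds $\|R^{(n)}_\lambda\|\le 1/\lambda$ and $\|R_\lambda\|\le 1/\lambda$ for every $\lambda>0$, while the Laplace representation $R_\lambda Y=\int_0^\infty e^{-\lambda t}T_t Y\,dt$ (and likewise for $R^{(n)}_\lambda$) links resolvents to semigroups in both directions.

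For the implication (1)$\Rightarrow$(2) I would first upgrade generator convergence to resolvent convergence. Fix $\lambda>0$; given $X\in\mathcal{D}$, set $Y:=(\lambda-\mathscr{L})X\in\mathcal{B}_0$ and pick $X^{(n)}$ as in (1). The identity
\begin{equation*}
R^{(n)}_\lambda Y - X^{(n)} = R^{(n)}_\lambda\big(Y-(\lambda-\mathscr{L}^{(n)})X^{(n)}\big),
\end{equation*}
together with $(\lambda-\mathscr{L}^{(n)})X^{(n)}=\lambda X^{(n)}-\mathscr{L}^{(n)}X^{(n)}\to\lambda X-\mathscr{L}X=Y$ and the bound $\|R^{(n)}_\lambda\|\le 1/\lambda$, forces $R^{(n)}_\lambda Y\to X=R_\lambda Y$. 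Because $\mathcal{D}$ is a core, the vectors $Y=(\lambda-\mathscr{L})X$ are dense in $\mathcal{B}_0$, and the uniform bound extends $R^{(n)}_\lambda\to R_\lambda$ strongly on all of $\mathcal{B}_0$ by a standard $\varepsilon/3$ density argument.

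The technical heart is passing from resolvent convergence to semigroup convergence uniform on compact time intervals. Here I would introduce the Yosida approximants $\mathscr{L}^{(n)}_\mu:=\mu^2 R^{(n)}_\mu-\mu I$ and $\mathscr{L}_\mu:=\mu^2 R_\mu-\mu I$ (bounded, generating the uniformly continuous semigroups $T^{(n),\mu}_t:=e^{t\mathscr{L}^{(n)}_\mu}$ and $T^{\mu}_t:=e^{t\mathscr{L}_\mu}$) and split, for $Y\in\mathcal{B}_0$ and $0\le t\le s$,
\begin{equation*}
\|T^{(n)}_t Y-T_t Y\| \le \|T^{(n)}_t Y-T^{(n),\mu}_t Y\| + \|T^{(n),\mu}_t Y-T^{\mu}_t Y\| + \|T^{\mu}_t Y-T_t Y\|.
\end{equation*}
The outer two terms are controlled, uniformly in $t\le s$, by the standard Yosida estimate as $\mu\to\infty$, the first one requiring uniformity in $n$ (which follows from the common contraction and resolvent bounds $\|\mathscr{L}^{(n)}_\mu\|\le 2\mu$); for the middle term, fixed-$\mu$ resolvent convergence yields $\mathscr{L}^{(n)}_\mu\to\mathscr{L}_\mu$ strongly, and exponentials of uniformly bounded, strongly convergent operators converge strongly, uniformly on compacts. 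Choosing $\mu$ large and then $n$ large delivers (2).

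The reverse implication (2)$\Rightarrow$(1) is the easy direction: dominated convergence in the Laplace representation (the factor $e^{-\lambda t}$ tames the tail, uniform-on-compacts convergence handles the bulk) gives $R^{(n)}_\lambda Y\to R_\lambda Y$ for $Y\in\mathcal{B}_0$; then for $X\in\mathcal{D}$ one takes $X^{(n)}:=R^{(n)}_\lambda(\lambda-\mathscr{L})X$, so that $X^{(n)}\to R_\lambda(\lambda-\mathscr{L})X=X$ and $\mathscr{L}^{(n)}X^{(n)}=\lambda X^{(n)}-(\lambda-\mathscr{L})X\to\mathscr{L}X$. I expect the main obstacle to be the uniform-in-$n$ control of the first Yosida term, aggravated by the fact that the $\mathscr{L}^{(n)}$ are defined on all of $\mathcal{B}$ whereas $\mathscr{L}$ lives only on the closed subspace $\mathcal{B}_0$; this asymmetry means the two generators can never be applied to the same vector, so every comparison must be funneled through the resolvents and the exponential inversion rather than through a direct generator estimate.
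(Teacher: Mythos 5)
The paper itself gives no proof of this statement --- it is imported verbatim from Davies \cite{Dav80} --- so your proposal has to be measured against the standard textbook proofs (Davies, Kato, Pazy, Ethier--Kurtz), with which it shares its first and last steps but not its middle one. Your reduction of condition (1) to strong resolvent convergence is correct (the core property makes $(\lambda-\mathscr{L})\mathcal{D}$ dense in $\mathcal{B}_0$, and the uniform bound $\|R^{(n)}_\lambda\|\le 1/\lambda$ extends the convergence), and your direction (2)$\Rightarrow$(1) via the Laplace representation is also fine. The genuine gap sits exactly where you yourself flagged it: the claim that the uniformity in $n$ of the first Yosida term ``follows from the common contraction and resolvent bounds $\|\mathscr{L}^{(n)}_\mu\|\le 2\mu$'' is a non sequitur. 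That bound controls the \emph{size} of the Yosida approximant, not the \emph{quality} of the approximation: the Hille--Yosida estimate reads $\|T^{(n)}_tx - e^{t\mathscr{L}^{(n)}_\mu}x\|\le t\,\|(\mu R^{(n)}_\mu - I)\mathscr{L}^{(n)}x\|$ for $x\in\mbox{Dom}(\mathscr{L}^{(n)})$, so its rate is governed by graph-norm data of $\mathscr{L}^{(n)}$, over which your hypotheses give no control uniform in $n$ (a fixed $Y\in\mathcal{B}_0$ need not lie in any $\mbox{Dom}(\mathscr{L}^{(n)})$ at all). The failure is real, not just unproved: on $\mathcal{B}=\mathbb{C}$ take the contraction semigroups $T^{(n)}_t=e^{int}$; for fixed $\mu$ one computes $\mathscr{L}^{(n)}_\mu = i\mu n/(\mu-in)\to -\mu$ as $n\to\infty$, so $|T^{(n)}_tY - e^{t\mathscr{L}^{(n)}_\mu}Y|\to |e^{int}-e^{-\mu t}|\,|Y|\ge (1-e^{-\mu t})|Y|$, which no choice of large $\mu$ makes small. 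Since contractivity and $\|\mathscr{L}^{(n)}_\mu\|\le 2\mu$ hold in this example, no argument using only those two facts can close your gap; any correct treatment of that term must invoke the resolvent convergence itself. (Historically, this is precisely the step that was defective in Trotter's 1958 paper and was repaired by Kato in 1959.)

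The standard repair abandons the three-term Yosida splitting and funnels all $n$-dependence through the resolvent difference via an integral identity: for $x\in\mbox{Dom}(\mathscr{L})$, differentiating $r\mapsto T^{(n)}_{t-r}R^{(n)}_\lambda T_r x$ and using $\mathscr{L}^{(n)}R^{(n)}_\lambda=\lambda R^{(n)}_\lambda - I$ together with $T_rx=R_\lambda(\lambda-\mathscr{L})T_rx$ gives
\begin{equation*}
R^{(n)}_\lambda T_t x - T^{(n)}_t R^{(n)}_\lambda x \;=\; \int_0^t T^{(n)}_{t-r}\bigl(R_\lambda - R^{(n)}_\lambda\bigr)T_r(\lambda-\mathscr{L})x\,dr,
\end{equation*}
whose norm tends to $0$ uniformly for $t\in[0,s]$ by dominated convergence and your Step 1; writing $T^{(n)}_tR_\lambda x - T_tR_\lambda x$ as $T^{(n)}_t(R_\lambda-R^{(n)}_\lambda)x$ plus this bracket plus $(R^{(n)}_\lambda-R_\lambda)T_tx$ (uniform on $[0,s]$ because $\{T_tx : 0\le t\le s\}$ is norm-compact) yields (2) on the dense set $R_\lambda\,\mbox{Dom}(\mathscr{L})$, hence everywhere by contractivity. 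Alternatively, your splitting can be salvaged by replacing the unjustified claim with the refined estimate $\|T^{(n)}_tY - e^{t\mathscr{L}^{(n)}_\mu}Y\|\le 2\|(I-\mu R^{(n)}_\mu)Y\| + t\mu\|(I-\mu R^{(n)}_\mu)^2Y\|$: for $Y\in\mbox{Dom}(\mathscr{L})$, resolvent convergence lets you pass to the limit in $n$, and $\mu\|(I-\mu R_\mu)^2Y\|\le\|(I-\mu R_\mu)\mathscr{L}Y\|\to 0$ as $\mu\to\infty$, after which density finishes the argument --- but note that this is a limsup-in-$n$ statement on the dense domain of the \emph{limit} generator, not the uniform-in-$n$ statement for arbitrary $Y\in\mathcal{B}_0$ that you asserted, and it uses resolvent convergence in exactly the place where your write-up does not.
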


\begin{proposition}\label{proposition generator}
Let $T_t^{(\alpha nc)}$ and $T_t$ be the one-parameter 
semigroups on $\mathcal{B}(\mathcal{H})$ and $\mathcal{B}_0$
defined in Definition \ref{semigroups}, respectively. 
We now have:
  \begin{equation*}
      \lim_{n \to \infty} \left\{\sup_{0 \le t \le s} 
    \left\|T_t^{(\alpha nc)}(I) - I\right\|\right\} = 0,
  \end{equation*} 
for all $0 \le s < \infty$.  
\end{proposition}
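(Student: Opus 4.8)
The plan is to reduce the entire statement to a single estimate on the norm of the generator evaluated at $I$. By the preceding Lemma, $T_t^{(\alpha nc)}$ is a norm-continuous contraction semigroup whose generator $\mathscr{L}^{(\alpha nc)}$ is bounded, so Duhamel's formula applies to the vector $I$ and gives
\begin{equation*}
T_t^{(\alpha nc)}(I) - I = \int_0^t T_\tau^{(\alpha nc)}\big(\mathscr{L}^{(\alpha nc)}(I)\big)\, d\tau.
\end{equation*}
First I would use contractivity of each $T_\tau^{(\alpha nc)}$ to estimate, uniformly in $t \in [0,s]$,
\begin{equation*}
\big\|T_t^{(\alpha nc)}(I) - I\big\| \le \int_0^t \big\|\mathscr{L}^{(\alpha nc)}(I)\big\|\, d\tau \le s\,\big\|\mathscr{L}^{(\alpha nc)}(I)\big\|.
\end{equation*}
The key point is that this majorant is independent of $t$, so taking the supremum over $t \in [0,s]$ is free and the problem collapses to showing that $\big\|\mathscr{L}^{(\alpha nc)}(I)\big\| \to 0$ as $n \to \infty$.

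For that, I would substitute the explicit formula from Proposition \ref{propostion generator},
\begin{equation*}
\mathscr{L}^{(\alpha nc)}(I) = -\frac{1}{2}\frac{L^*L}{2n+1+2a} + \frac{\overline{\alpha}L - \alpha L^*}{\sqrt{2n+1+2a}},
\end{equation*}
and bound it termwise by the triangle inequality and submultiplicativity of the operator norm, obtaining
\begin{equation*}
\big\|\mathscr{L}^{(\alpha nc)}(I)\big\| \le \frac{\|L\|^2}{2(2n+1+2a)} + \frac{2|\alpha|\,\|L\|}{\sqrt{2n+1+2a}}.
\end{equation*}
Since the phase $\theta$ is held fixed while $n \to \infty$, the quantity $2n+1+2a = 2n+1+2\sqrt{n(n+1)}\cos\theta$ behaves like $2n(1+\cos\theta)$ and therefore diverges; both terms above then vanish, which completes the proof.

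The main obstacle here is conceptual rather than computational: one must recognise that the uniform-in-$t$ majorant $s\,\big\|\mathscr{L}^{(\alpha nc)}(I)\big\|$ allows the supremum over $t$ and the limit in $n$ to be handled independently. Once this is seen, the substance of the statement has already been discharged in Proposition \ref{propostion generator}, where the cancellation between the $L_{nc}$ and $F_{nc}$ contributions collapsed $\mathscr{L}^{(\alpha nc)}(I)$ to a quantity of order $1/\sqrt{2n+1+2a}$; no separate estimate on the unitaries $U_t^{nc}$ or $V_t^{nc}$ is required.
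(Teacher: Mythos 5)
Your proof is correct, but it takes a genuinely different route from the paper. The paper invokes the Trotter--Kato theorem: it checks that the generators $\mathscr{L}^{(\alpha nc)}(I)$ converge to $0 = \mathscr{L}(I)$ (Proposition \ref{propostion generator}), and then uses the implication ``convergence of generators on a core $\Rightarrow$ convergence of semigroups uniformly on compact time intervals'' to conclude. You instead prove that implication directly in this special case via the Duhamel identity
\begin{equation*}
T_t^{(\alpha nc)}(I) - I = \int_0^t T_\tau^{(\alpha nc)}\big(\mathscr{L}^{(\alpha nc)}(I)\big)\, d\tau,
\end{equation*}
which is legitimate here because the paper's Lemma establishes that $T_t^{(\alpha nc)}$ is a norm-continuous contraction semigroup with bounded generator, and because the limiting semigroup is trivial ($T_t(I)=I$, $\mathscr{L}=0$). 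Your argument is more elementary and even quantitative: it yields the explicit uniform rate $\sup_{0\le t\le s}\|T_t^{(\alpha nc)}(I)-I\| \le s\,\|\mathscr{L}^{(\alpha nc)}(I)\| = O\big(s/\sqrt{2n+1+2a}\,\big)$, which the Trotter--Kato route does not provide. What the paper's approach buys in exchange is generality and uniformity of method: Trotter--Kato applies when generators are unbounded or semigroups are only strongly continuous, and the author deliberately follows the same template as the adiabatic elimination proofs it cites; in the present bounded, essentially trivial-limit setting that machinery is strictly stronger than needed, and your estimate replaces it without loss. One small caveat: your final step asserts that $2n+1+2a \sim 2n(1+\cos\theta)$ diverges, which fails at $\cos\theta = -1$ (there $2n+1-2\sqrt{n(n+1)} \to 0$); however, this restriction on the phase is equally implicit in the paper's own Proposition \ref{proposition generator}, whose claimed limit $\mathscr{L}^{(\alpha nc)}(I) \to 0$ requires the same divergence, so this is an assumption you inherit from the paper rather than a gap you introduce.
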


\begin{proof}
Note that both semigroups are norm-continuous and therefore
also strongly continuous. Note that the generator $\mathscr{L}$
of $T_t = \exp(t\mathscr{L})$ is equal to $0$. We are now 
going to apply the Trotter-Kato theorem (Thm \ref{thm Trotter-Kato})
with $\mathcal{D} = \mathcal{B}_0$. If we take $X^{(n)} =I$ 
for all $n$, then obviously we have $\lim_{n \to \infty } X^{(n)} = I$. 
It follows from Propostion \ref{propostion generator} that 
\begin{equation*}
\lim_{n\to \infty} \mathscr{L}^{(\alpha nc)}(X^{(n)}) = 
\lim_{n\to \infty} \mathscr{L}^{(\alpha nc)}(I) = 0 = \mathscr{L}(I).
\end{equation*}
Since all elements in $\mathcal{B}_0$ are multiples of $I$, we have 
the above result for all elements in $\mathcal{B}_0$. The proposition 
then follows from the Trotter-Kato Theorem and the fact 
that $T_t(I) =I$.
\end{proof}

Let $f$ be a function in $L^2(\mathbb{R})$. We define
the Weyl operator $W_t(f) = W(f\chi_{[0,t]})$ (where 
$\chi_{[0,t]}$ is the indicator function of the interval 
$[0,t]$), by the following QSDE
\begin{equation*}
dW_t(f) = \left\{f(t)dA_t^* - \overline{f(t)}dA_t 
-\frac{1}{2}|f(t)|^2dt\right\}W_t(f), \ \ \ \ W_0(f) = I.
\end{equation*}
If we act with W(f) on the vacuum $\Phi$, then we get 
the coherent vector $\psi(f)$. The coherent vectors form 
a dense set in $\mathcal{F}$.

\noindent{\bf Proof of Theorem \ref{thm main result}}
Let $t \ge 0$. Let $f$ be a step function in 
$L^2([0,t])$, i.e.\ there 
exists an $m \in \mathbb{N}$ and $0 =t_0 < t_1<\ldots<t_m =t$
and $\alpha_1,\ldots,\alpha_m \in \mathbb{C}$ such 
that 
  \begin{equation*}
  s \in [t_{i-1},t_i) \Longrightarrow f(s) = \alpha_i,\qquad 
  \forall i \in \{1,\ldots,m\}.
  \end{equation*}
Let $\psi(f)$ be the coherent vector with respect to $f$. 
Let $v$ be an element in $\mathcal{H}$. 
The cocycle property of solutions to QSDE's and 
the exponential property of the symmetric Fock 
space lead to 
\begin{equation*}\begin{split}
\Big\langle v\otimes \psi(f),\, 
{V_t^{nc}}^*U_t^{nc}v\otimes \psi(f)\Big\rangle &=
\Big\langle v\otimes \Phi,\, 
\big(V_t^{nc}W(f)\big)^*U_t^{nc}W(f)v\otimes \Phi\Big\rangle = \\
&= \Big\langle v,\, T_{t_1}^{(\alpha_1 nc)}  \cdots 
T_{t-t_m}^{(\alpha_m nc)}(I) v\Big\rangle
\end{split}\end{equation*}
Now we have due to Proposition \ref{proposition generator}
\begin{equation*}\begin{split}
&\lim_{n\to \infty}\big\|(U_t^{nc}-V_t^{nc})v\otimes\psi(f)\big\|^2 = \\
&\lim_{n\to \infty} \Bigg\langle v,\, \Big(2I - T_{t_1}^{(\alpha_1 nc)}  \cdots 
T_{t-t_m}^{(\alpha_m nc)}(I) - T_{t_1}^{(\alpha_1 nc)}  \cdots 
T_{t-t_m}^{(\alpha_m nc)}(I)^*\Big)v\Bigg\rangle =0.
\end{split}\end{equation*}
The Thm now follows because the step functions are 
dense in $L^2(\mathbb{R})$ and the span of all 
coherent vectors, i.e.\ $\mbox{span}\{\Psi(f), f\in L^2(\mathbb{R})\}$,  
is dense in $\mathcal{F}$. \qed

\bibliography{squeez}

\begin{thebibliography}{10}

\bibitem{AFLu90}
L.~Accardi, A.~Frigerio, and Y.~Lu.
\newblock The weak coupling limit as a quantum functional central limit.
\newblock {\em Commun. Math. Phys.}, 131:537--570, 1990.

\bibitem{Bel92b}
V.~P. Belavkin.
\newblock Quantum stochastic calculus and quantum nonlinear filtering.
\newblock {\em J. Multivar. Anal.}, 42:171--201, 1992.

\bibitem{BoS08}
L.~Bouten and A.~Silberfarb.
\newblock Adiabatic elimination in quantum stochastic models.
\newblock {\em {C}ommun.~Math.~Phys.}, 283:491--505, 2008.

\bibitem{BHJ07}
L.~Bouten, R.~{van Handel}, and M.~James.
\newblock An introduction to quantum filtering.
\newblock {\em SIAM J.~Control Optim.}, 46:2199--2241, 2007.

\bibitem{BHS08}
L.~Bouten, R.~{van Handel}, and A.~Silberfarb.
\newblock Approximation and limit theorems for quantum stochastic models with
  unbounded coefficients.
\newblock {\em {J}. {F}unct. {A}nal.}, 254:3123--3147, 2008.

\bibitem{Bou04b}
L.~M. Bouten.
\newblock {\em Filtering and Control in Quantum Optics}.
\newblock PhD thesis, Radboud Universiteit Nijmegen, quant-ph/0410080, 2004.

\bibitem{Dav80}
E.~Davies.
\newblock {\em One-parameter semigroups}.
\newblock Academic Press Inc (London) Ltd, 1980.

\bibitem{GaZ00}
C.~Gardiner and P.~Zoller.
\newblock {\em Quantum Noise}.
\newblock Springer, Berlin, 2000.

\bibitem{GC84}
C.~W. Gardiner and M.~J. Collett.
\newblock Input and output in damped quantum systems: Quantum stochastic
  differential equations and the master equation.
\newblock {\em Phys. Rev. A}, 31:3761--3774, 1985.

\bibitem{Gou05}
J.~Gough.
\newblock Quantum flows as {Markovian} limit of emission, absorption and
  scattering interactions.
\newblock {\em Commun. Math. Phys.}, 254:489--512, 2005.

\bibitem{GoV07}
J.~Gough and R.~{van Handel}.
\newblock Singular perturbation of quantum stochastic differential equations
  with coupling through an oscillator mode.
\newblock {\em J.~Stat.~Phys.}, 127:575, 2007.

\bibitem{GrW02}
M.~Gregoratti and R.~Werner.
\newblock Quantum lost and found.
\newblock {\em J. Mod. Opt.}, 50:915--933, 2002.

\bibitem{HHKKR}
J.~Hellmich, R.~Honegger, C.~K\"ostler, B.~K\"ummerer, and A.~Rieckers.
\newblock Couplings to classical and non-classical squeezed white noise as
  stationary {Markov} processes.
\newblock {\em Publ. RIMS}, 38:1--31, 2002.

\bibitem{HuP84}
R.~L. Hudson and K.~R. Parthasarathy.
\newblock Quantum {It\^o's} formula and stochastic evolutions.
\newblock {\em Commun. Math. Phys.}, 93:301--323, 1984.

\bibitem{Kat59}
T.~Kato.
\newblock Remarks on pseudo-resolvents and infinitesimal generators of
  semigroups.
\newblock {\em Proc.~Japan.~Acad.}, 35:467--468, 1959.

\bibitem{KuM87}
B.~K\"ummerer and H.~Maassen.
\newblock The essentially commutative dilations of dynamical semigroups on
  {$M_n$}.
\newblock {\em Commun. Math. Phys.}, 109:1--22, 1987.

\bibitem{Tro58}
H.~Trotter.
\newblock Approximations of semigroups of operators.
\newblock {\em Pacific J.~Math.}, 8:887--919, 1958.

\end{thebibliography}
\end{document}